\documentclass[conference,a4paper]{IEEEtran}
\usepackage{cite}
\usepackage{amsmath,amssymb,amsfonts,amsthm}
\usepackage{graphicx}
\usepackage{booktabs}
\usepackage{xcolor}
\usepackage{balance}
\newtheorem{theorem}{Theorem}
\newcommand{\Real}{\mathcal{R}}
\newcommand{\Q}{Q}
\newcommand{\conv}{\operatorname{conv}}

\begin{document}
\title{Beyond No-Good Benders Cuts: Exact Realizability\newline
for Discretely Tunable Clock Trees}
\author{\IEEEauthorblockN{Boxiang Song and Keren Zhu}
\IEEEauthorblockA{College of Integrated Circuits and Micro-Nano Electronics\\
Fudan University, Shanghai, China}}
\maketitle

\begin{abstract}
Useful-skew schedules can satisfy timing constraints yet remain unrealizable
by a fixed clock tree with discrete tuning choices. We formulate this mismatch
as exact membership in a finite relative-latency set and develop a checkable
feedback interface between the scheduler and the tree model. Arithmetic
certificates explain unrealizable targets, while tree-specific contraction
reduces the structural size of the exact relations projected onto selected
sinks. These relations become realizability cuts that can exclude more candidates
than a no-good on the same certificate support, including discrete holes that
linear inequalities cannot separate. Controlled experiments confirm fewer
oracle calls and faster projection construction. They also expose important
limits: globally minimum certificates can cost more than they save, compact
arithmetic feedback can fail on non-parity obstructions, and direct monolithic
optimization remains faster on the tested additive models. The contribution
is an exact, independently verifiable scheduler--tree interface, rather than a
claim of universal solver acceleration or physical signoff.
\end{abstract}

\begin{IEEEkeywords}
clock trees, useful skew, discrete optimization, infeasibility certificates,
logic-based Benders decomposition
\end{IEEEkeywords}

\section{Introduction}
Clock-skew scheduling trades arrival times against setup and hold margins
\cite{fishburn1990clock}. Prescribed-skew synthesis constructs a topology or
route for a schedule \cite{kourtev1999synthesis,tsao2000ust,held2003clock},
whereas useful-skew flows may replace exact targets with ranges or upper bounds
\cite{ewetz2017predictable,uysal2019latency}. A different question arises after
topology and tuning options are frozen: which schedules can the remaining
discrete choices realize exactly? Even a target inside the convex hull of
realizable schedules can be impossible.

We study this finite-model gap through logic-based Benders decomposition
(LBBD) \cite{hooker2003logic}. The scheduler proposes a latency vector; the
tree oracle returns a tuning witness or a checkable explanation and a valid
cut (Fig.~\ref{fig:loop}). Generic combinatorial Benders cuts and conflict
extraction already strengthen point no-goods
\cite{codato2006combinatorial,junker2004quickxplain}; their benefit must be
weighed against explanation cost \cite{cire2016computational}. Our
specialization is the combination of finite edge-domain proofs, laminar tree
contraction, and exact sink-projection feedback. Irreducible conflicts are not
necessarily minimum-cardinality certificates.

The contributions are: an exact relative-latency model with a sharp
two-sink complexity boundary; a sound and complete arithmetic certificate
system; and tree-specific projection cuts with an oracle-call separation from
no-goods. Same-model experiments isolate cut strength, projection compilation,
and end-to-end cost. \textbf{Fewer oracle calls do not imply faster closure};
the evaluation retains both resource limits and unfavorable comparisons.

\begin{figure}[t]
\centering
\includegraphics[width=\columnwidth]{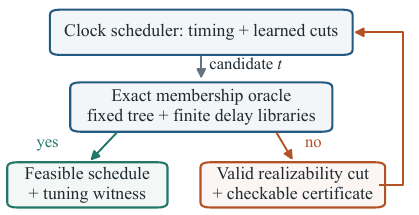}
\caption{Exact scheduler--tree feedback. Rejected targets generate sound
linear, congruence, or projected-relation cuts; a feasible target has a tuning
witness. All guarantees refer to the declared finite model.}
\label{fig:loop}
\end{figure}

\section{Exact Model and Tree Structure}
Let $T=(V,E)$ be a rooted tree with ordered sinks $L=(1,\ldots,n)$.
Each edge independently selects an integer delay $x_e\in D_e$ from a finite
nonempty library; a tunable node is folded into its incoming edge. With
root-to-sink incidence matrix $M$, absolute latency is $Mx$. Setup/hold
difference constraints ignore common latency, so define
\begin{equation}
 \Q(y)=y-y_1\mathbf1,\qquad
 \Real=\{\Q(Mx):x_e\in D_e\}.
 \label{eq:R}
\end{equation}
Targets are integer vectors $t\in\mathbb Z^n$ with $t_1=0$.
This is an additive, single-corner model, not a signoff timing model.
For selected sinks $C$, write $\pi_C(r)=\Q(r|_C)$.
With $a_e=\Q(M_{:,e})$, exact propagation starts from $S_0=\{0\}$ and adds
the finite set $\{d a_e:d\in D_e\}$ at each edge, merging duplicate states.
Induction yields the finite Minkowski sum $S_{|E|}=\Real$; predecessor records
recover a tuning witness. Intermediate state counts can be exponential.
Enumeration also finds the nearest vector: for $\delta=t-r$,
$\min_c\|t-r-c\mathbf1\|_\infty=(\max_i\delta_i-\min_i\delta_i)/2$.

\begin{theorem}
Exact membership is weakly NP-complete even for two sinks with binary
edge libraries.
\end{theorem}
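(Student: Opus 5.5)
Membership lies in NP because a tuning vector $x$ with $x_e\in D_e$ is a polynomial-size witness. Given $x$, we compute $Mx$, apply $\Q$, and compare with $t$ in polynomial time. For hardness we reduce from \textsc{Subset Sum}: given positive integers $w_1,\ldots,w_m$ and a target $W$ (all in binary), decide whether some subset sums to $W$. Since \textsc{Subset Sum} is only weakly NP-complete, the claim is that the hardness survives when numbers are encoded in binary. Membership is pseudo-polynomial via the propagation of Section~II, because with two sinks every state is a single integer $r_2\in[-\sum_e\max|D_e|,\sum_e\max|D_e|]$. So the problem is weakly, not strongly, NP-complete.

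The construction uses a root $\rho$ with two children: sink $1$ via an edge $e_0$ with $D_{e_0}=\{0\}$ (or $\{0,0\}$ if a binary library is required to be literally two-element), and a path to sink $2$. For each item $i$ the path to sink $2$ gets one edge $e_i$ with $D_{e_i}=\{0,w_i\}$. A path of degree-two internal nodes is a legal rooted tree. Then $a_{e_i}=\Q(M_{:,e_i})=(0,1)$, so $\Real=\{(0,\sum_{i\in S}w_i): S\subseteq[m]\}$. Setting $t=(0,W)$ gives $t\in\Real$ if and only if the instance is a yes-instance. The reduction is linear in the input size.

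The main subtlety is making sure \emph{two sinks} and \emph{binary libraries} are not trivialized. With a single sink, $\Q$ kills everything, so two sinks is the smallest nontrivial case. If the library must contain two distinct values, we can give sink $1$'s edge the library $\{0,0\}$ or simply no free edge at all. A cleaner alternative is to distribute items over both branches: put $\{0,w_i\}$ on sink $2$'s side only, and let sink $1$'s edge be fixed. I would also note that allowing delays on sink $1$'s branch only subtracts values, which is still \textsc{Subset Sum} with signed weights, so either layout works.

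Finally I would remark why this boundary is sharp. Membership is polynomial for unary encodings by the pseudo-polynomial state bound. It is trivial for $n=1$.
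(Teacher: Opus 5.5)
Your proof is correct and matches the paper's argument: NP membership via the edge-assignment witness, hardness via a chain of $\{0,w_i\}$ libraries to one sink with the other sink fixed at zero so that the target $(0,W)$ encodes Subset Sum, and weakness via the single-integer reachable-sum DP. One aside needs correcting: $\{0,0\}$ is just the singleton $\{0\}$, which is the same fixed edge the paper uses; if you insisted on two distinct values, you could give sink~1's edge the library $\{0,c\}$ with $c>\sum_i w_i$, since choosing $c$ forces a negative skew that cannot equal $W\ge 0$.
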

\begin{proof}
An edge assignment is a polynomial witness. Give one sink fixed latency zero
and the other a chain with libraries $\{0,a_j\}$. Relative target $B$ is
realizable exactly when the positive integers $a_j$ have a subset summing to
$B$. Reachable-sum dynamic programming is pseudo-polynomial.
\end{proof}

For a star, membership simplifies to
$\bigcap_i(D_i-t_i)\ne\varnothing$. If each shifted set has at most $q$
elements, any empty intersection is witnessed by at most $q+1$ sets: choose
one set and, for each of its elements, a set omitting it. The bound is tight
for $U\setminus\{i\}$ with $|U|=q+1$. Thus pairwise checks are insufficient,
as $\{0,1\},\{0,2\},\{1,2\}$ demonstrate. Finding the smallest empty
subfamily is NP-hard by complementation from set cover.

\begin{theorem}[Laminar projection contraction]
For $|C|\ge2$, group edges by $\sigma_e=C\cap\mathrm{desc}(e)$, discard
$\sigma_e\in\{\varnothing,C\}$, and replace each remaining group by its
library sumset. The resulting model preserves $\pi_C(\Real)$ and has at most
$2|C|-2$ effective edge groups.
\label{thm:contraction}
\end{theorem}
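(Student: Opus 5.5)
The plan is to write the projected relative latency as a sum over edges and then regroup. For $r=\Q(Mx)$ and $C=(c_1,\ldots,c_k)$, take $\pi_C(r)=\Q((Mx)|_C)$. The first step is to note that $\Q$ is linear and kills constant vectors. Hence $\pi_C(\Q(y))=\pi_C(y)$, and
\[
\pi_C(\Real)=\Bigl\{\textstyle\sum_e x_e\,\Q(\mathbf1_{\sigma_e}) : x_e\in D_e\Bigr\},
\]
where $\mathbf1_{\sigma_e}\in\{0,1\}^C$ is the indicator of the sinks of $C$ below $e$, since column $e$ of $M$ restricted to $C$ is exactly $\mathbf1_{\sigma_e}$. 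The contribution of edge $e$ therefore depends only on $\sigma_e$.

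Next I would drop the trivial classes. If $\sigma_e=\varnothing$ the contribution is $0$. If $\sigma_e=C$ it is $x_e\mathbf1$, which $\Q$ sends to $0$. Since each $D_e$ is nonempty, deleting these edges does not change the Minkowski sum. For each remaining class $\sigma$, the edges $e$ with $\sigma_e=\sigma$ contribute $\bigl(\sum_{e:\sigma_e=\sigma}x_e\bigr)\Q(\mathbf1_\sigma)$. The edges choose their delays independently, so the set of achievable coefficients is exactly the sumset $\sum_{e:\sigma_e=\sigma}D_e$. Replacing the group by one edge with this library preserves the Minkowski sum, and therefore preserves $\pi_C(\Real)$. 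This uses the same propagation identity as the exact-propagation paragraph, $S_{|E|}=\Real$. A witness is recovered by splitting the group delay back into its summands, keeping a predecessor record from the sumset construction.

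The counting bound is the step needing care. The sets $\sigma_e$ are of the form $C\cap\mathrm{desc}(v)$, so they form a laminar family on $C$: any two are nested or disjoint. A laminar family of nonempty proper subsets of a $k$-element ground set has at most $2k-2$ members. My argument would be to add $C$ itself and use the standard bound of $2k-1$ for laminar families of nonempty sets. That bound comes from induction: remove a minimal set, or view the family as a forest whose leaves are singletons or disjoint sets, where each internal node either has at least two children or adds a new element.

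The main obstacle is justifying the $2k-1$ bound cleanly. I would argue via the tree of the family. Each member $S$ has at most one parent (the minimal superset), and the roots are disjoint. Charge each member $S$ to the set $S\setminus\bigcup\text{children}(S)$ if that set is nonempty, and otherwise to the branching it creates, since such an $S$ has at least two children. At most $k$ members have a nonempty private set. In the forest restricted to members with at most one child, the branching members number at most (leaves $-1$), and the leaves have nonempty private sets. This gives at most $2k-1$ members in total, and $2k-2$ after removing $C$.

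Tree structure actually gives a simpler route. Contract the Steiner subtree spanned by the root and the sinks of $C$, suppressing degree-2 vertices. The distinct nontrivial $\sigma$ values correspond to the edges of the resulting tree, excluding those above the lowest common ancestor. That tree has $k$ leaves and internal vertices of degree at least 3 below the LCA, so it has at most $2k-2$ edges. I would present this version and keep the laminar count as a remark.
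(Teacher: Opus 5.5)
Your proposal is correct and is essentially the paper's proof. The preservation step is the same observation the paper makes: column $e$ projects to $\Q(\mathbf1_{\sigma_e})$, trivial columns vanish, and equal columns merge by sumset. Your count is the same laminar-family bound $2|C|-1$ with $C$ removed; the paper quotes this bound, and you sketch why it holds.

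The contracted Steiner tree is the Hasse diagram of that same laminar family, so it is the same count in tree language. It does need two extra conditions: the selected sinks must be leaves, and the LCA must be exempt from degree-2 suppression, since it has degree two when it is the root with exactly two branches. The laminar count needs neither, so it is the more robust version to write up. If you use it, tidy the charging sentence to say that members with at least two children number at most the number of leaves minus one, hence at most $|C|-1$.
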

\begin{proof}
All edges in one group have the same column $\Q(\mathbf1_{\sigma_e})$;
their Minkowski terms combine exactly by sumset addition. The discarded
columns are zero. Descendant sets form a laminar family on $C$, with at most
$2|C|-1$ nonempty members; removing $C$ gives the bound.
\end{proof}
\textbf{Structural projection size depends on the selected sinks}, not on
full-tree depth or edge count. The sumset cardinalities can nevertheless grow
exponentially; the theorem does not imply polynomial-time projection.

\section{Certificates and Scheduler Cuts}
\subsection{Checkable Unrealizability}
Choose sinks $L_C$ and constrained edges $E_C$. Restrict columns to $L_C$,
re-anchor, and delete the zero reference coordinate. Retain finite domains on
$E_C$ and relax every omitted relevant edge to any real delay. Let $A_R$
contain the relaxed columns. For each retained assignment
$z\in\prod_{e\in E_C}D_e$, the certificate stores
\begin{equation}
 \rho_z=\Q(t|_{L_C})-\sum_{e\in E_C}z_ea_e,\quad
 \alpha_z^T A_R=0,\quad \alpha_z^T\rho_z\ne0,
 \label{eq:trace}
\end{equation}
with the reference coordinate omitted and integer $\alpha_z$. Support is
$|L_C|+|E_C|$; trace length is a separate cost.

\begin{theorem}
The certificate system is sound and complete for unrealizability in
\eqref{eq:R}; exhaustive support search returns a global minimum.
\end{theorem}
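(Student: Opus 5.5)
Soundness, completeness and minimality will be handled as three separate claims. All three rest on one reformulation of the relaxed subproblem. Fix a support $(L_C,E_C)$ and a retained assignment $z\in\prod_{e\in E_C}D_e$. After restricting columns to $L_C$, re-anchoring, and deleting the reference coordinate, the relaxed model realizes $\Q(t|_{L_C})$ under $z$ exactly when $\rho_z\in\operatorname{span}_{\mathbb R}(A_R)$. The edges outside $E_C$ that are irrelevant to $L_C$ have zero projected column, as in the proof of Theorem~\ref{thm:contraction}, so they can be dropped.

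\textbf{Soundness.} Suppose $t\in\Real$ with witness $x$. Restricting to $L_C$ commutes with $\Q$ after re-anchoring, so $\pi_{L_C}(t)=\sum_e x_e a_e|_{L_C}$. Set $z=x|_{E_C}$; this is a retained assignment. Then $\rho_z=\sum_{e\notin E_C}x_e a_e = A_R\, x_R$, where the real relaxation only enlarges the domain. Multiplying by $\alpha_z^T$ gives $\alpha_z^T\rho_z=0$, which contradicts \eqref{eq:trace}. A verifier therefore only has to check $|\prod D_e|$ integer identities. This is what makes the certificate checkable.

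\textbf{Completeness.} Suppose $t\notin\Real$. Take the full support $L_C=L$, $E_C=E$. Then $A_R$ is empty, and the condition $\alpha_z^TA_R=0$ is vacuous. For every $z$ we have $\rho_z=t-\Q(Mz)\neq 0$, because $t$ is unrealizable. So we can choose $\alpha_z$ to be a signed unit vector on a nonzero coordinate, and it is integer. For smaller supports we appeal to Fredholm's alternative: $\rho_z\notin\operatorname{span}(A_R)$ holds iff some $\alpha$ satisfies $\alpha^TA_R=0$ and $\alpha^T\rho_z\ne0$. Since $A_R$ and $\rho_z$ are integer, Gaussian elimination over $\mathbb Q$ gives a rational $\alpha$, and clearing denominators makes it integer. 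Hence a certificate exists on a support exactly when every retained assignment is infeasible for the relaxation.

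\textbf{Minimality.} The previous step shows that whether a certificate exists on a given support is decidable. It amounts to a finite enumeration of $z$ plus one rational rank test per $z$. The candidate supports form a finite set, and the full support always qualifies. So we enumerate supports in nondecreasing order of $|L_C|+|E_C|$ and return the first that passes; this is a global minimum for that size measure. The step I expect to need the most care is the re-anchoring bookkeeping. One must check that dropping the reference coordinate and relaxing the omitted edges to arbitrary reals corresponds to a genuine relaxation of $\pi_{L_C}(\Real)$. In particular, common-mode shifts have to be absorbed by $\Q$ rather than by $A_R$, or soundness could fail. I would verify this first on the two-sink example of the NP-completeness theorem.
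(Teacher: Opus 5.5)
Your proposal is correct and follows the paper's proof. Soundness holds because a realizing assignment restricted to $E_C$ puts $\rho_z$ in $\operatorname{span}(A_R)$. Completeness uses the full support with $A_R$ empty; the paper takes $\alpha_z=\rho_z$ where you take a signed unit vector, and either works. Minimality follows from finite exhaustive search over supports. Your Fredholm-alternative step only makes explicit the per-support test that the paper leaves to its rational left-nullspace constructor. The re-anchoring concern you flag is already settled by the linearity of $\Q$ together with $\Q(c\mathbf1)=0$, so no separate check is needed.
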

\begin{proof}
Equation~\eqref{eq:trace} rules out every real completion of every retained
assignment, and relaxing domains can only enlarge the model. Conversely,
retain all relevant edges. For an unrealizable target every residual is
nonzero, $A_R$ is empty, and $\alpha_z=\rho_z$ is a proof. The support family
is finite, so exhaustive search with safe cardinality pruning finds a minimum.
\end{proof}
The constructor computes rational left-nullspace bases and clears
denominators. The independent verifier checks assignment coverage, recomputes
residuals, and checks both dot products. \textbf{The verifier certifies
infeasibility, not global minimality}; the latter requires exhaustive search.
Both search and explicit traces may be exponential.

\subsection{From Explanations to Realizability Cuts}
The master contains $\ell_{ij}\le t_i-t_j\le u_{ij}$ and objective $f(t)$.
For a translation-invariant linear form $\alpha$, independent edge choices give
the exact support bound
$\beta=\sum_e\max_{d\in D_e}d\sum_{i\text{ below }e}\alpha_i$.
If $\alpha^Tt>\beta$, the cut $\alpha^Ty\le\beta$ is valid. Congruence cuts
instead restrict $\alpha^Ty$ to its exact reachable residues modulo $m$.
Neither bounded arithmetic search is complete.

The complete fallback uses Theorem~\ref{thm:contraction} to compute the
projected relation $P_C=\pi_{L_C}(\Real)=\{r^1,\ldots,r^k\}$.
Its exact MILP encoding is
\begin{equation}
 \Q(t|_{L_C})=\sum_{j=1}^k\lambda_j r^j,\qquad
 \sum_{j=1}^k\lambda_j=1,\quad \lambda_j\in\{0,1\}.
 \label{eq:projection}
\end{equation}
This is a table constraint in a finite-domain master. All three cut families
preserve every vector in $\Real$; cuts are admitted only when they reject the
current target. Repeated separation therefore preserves the feasible optimum.
The implemented hierarchy tries support-two linear forms, congruences through
support four with moduli $2$--$12$, then linear forms through support four,
before minimum-certificate projection.

A certificate-core no-good rejects only the current projected value; exact
projection retains only reachable values on the same support. Their accepted
sets satisfy
\begin{equation}
 A_{\rm projection}\subseteq A_{\rm core}\subseteq A_{\rm point}.
 \label{eq:dominance}
\end{equation}
This is set dominance, not runtime dominance. Linear cuts cannot separate
holes inside $\conv(\Real)$: two libraries $\{0,2\}$ realize skews
$\{-2,0,2\}$ but not $1$. More strongly, let $\Real_N=\{0,2N\}$ and minimize
$|t-N|$ over integer $t\in[0,2N]$. Point and one-coordinate core no-goods
reject all $2N-1$ interior points before a feasible call, requiring $2N$ calls.
\textbf{Exact projection needs only two calls}; linear-only feedback stalls.

\begin{figure}[t]
\centering
\includegraphics[width=\columnwidth]{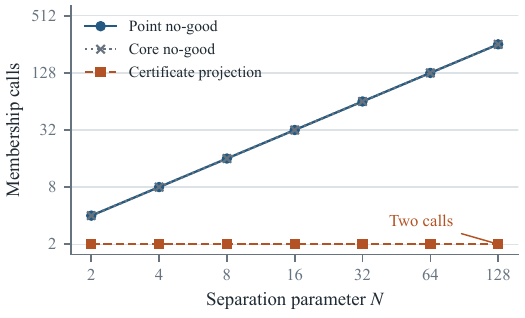}
\caption{Measured separation family on logarithmic axes. Point and core
no-goods coincide at $2N$ calls; exact projection uses two throughout.
This controlled family illustrates cut strength, not runtime speedup.}
\label{fig:separation}
\end{figure}

\begin{figure*}[t]
\centering
\includegraphics[width=\textwidth]{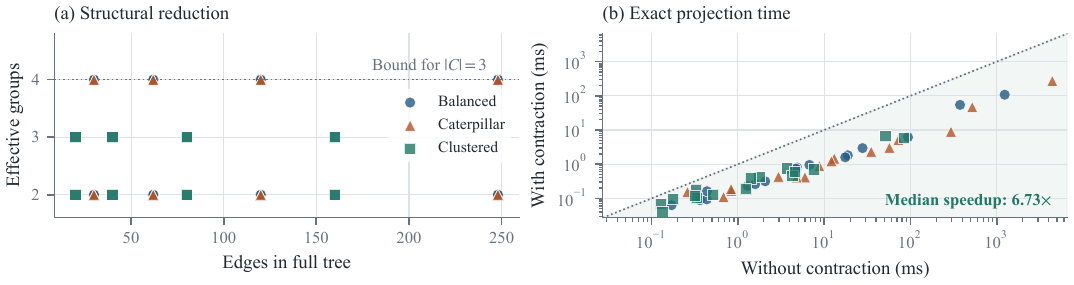}
\caption{Laminar contraction on 48 exact projections. (a) Full-tree edges
reduce to two to four effective groups; markers may overlap. (b) Repeated
vector-only projection time, including grouping and sumsets. Logarithmic axes;
the shaded region below the diagonal favors contraction. This is a
projection-stage ablation, not an end-to-end speedup.}
\label{fig:contraction}
\end{figure*}

\section{Evaluation}
\subsection{Common-Model Protocol}
The cut-isolation suite uses 160 distinct four-sink trees: four topologies,
four library patterns, and ten seeds. Each supplies a center hole and a
maximum-depth stress hole, yielding 320 workloads. Strategies share an
objective-ordered enumerative master and exact oracle; certificate strategies
share global minimum certificates. Every cut is checked against all reachable
vectors. All complete strategies attain the same optimum. Projection beats
point no-good in \textbf{295 workloads and ties in 25, with no loss}.
Table~\ref{tab:cuts} separates center and deepest-hole policies; linear-only
feedback stalls on 240/320 workloads. Payload counts stored coefficients,
residues, or projected coordinates, summed over cuts per workload. Full-domain
feedback achieves two calls but stores a much larger relation; the hierarchy
reduces payload while taking more calls than certificate projection.
The hierarchy emits 337 linear, 635 congruence, and 69 projection cuts;
the scalar-count proxy excludes solver-specific presolve.
Figure~\ref{fig:separation} independently checks the analytic call counts.
These controlled holes measure cut strength, not industrial prevalence.

\begin{table}[t]
\caption{Cut strength and size on 320 workloads. Center/deep columns:
mean calls on 160 workloads each. Payload: mean scalar count over all 320.
Dashes avoid conditional means for stalled strategies.}
\label{tab:cuts}
\centering\small
\setlength{\tabcolsep}{2.7pt}
\begin{tabular}{lrrrr}
\toprule
Feedback & Solved & Center & Deep & Payload\\
\midrule
Point no-good & 320/320 & 15.06 & 882.73 & 447.89\\
Certificate-core & 320/320 & 3.77 & 16.56 & 18.73\\
Linear only & 80/320 & -- & -- & --\\
Linear $\to$ proj. & 320/320 & 2.75 & 4.46 & 95.86\\
Cert. projection & 320/320 & \textbf{2.74} & \textbf{3.81} & 105.29\\
Compact hierarchy & 320/320 & 3.02 & 5.49 & 82.41\\
\midrule
Full-domain control & 320/320 & 2.00 & 2.00 & 880.45\\
\bottomrule
\end{tabular}

\end{table}

A separate suite uses 32 center-hole workloads with integer setup/hold-form
masters, anchored $L_1$ objectives, and SciPy/HiGHS. Every method receives one
warmup and five fresh-state runs in seeded shuffled order. Times include
relation construction, model building, cut validation, and solving; generation
and process startup are excluded. All 1,280 runs match independently
enumerated optima. Monolithic uses one-hot edge choices; table uses one-hot
reachable vectors; MDD uses flow through layered reachable states. These are
common-solver encodings, not implementations of the CP propagators in
\cite{andersen2007mdd,demeulenaere2016compact}. Deletion-MUS uses an
irreducible conflict; its subset checks count toward time.

\begin{table}[t]
\caption{Same-solver results on 32 masters. Times include preprocessing
and use five measured repetitions.}
\label{tab:solver}
\centering\small
\setlength{\tabcolsep}{3pt}
\begin{tabular}{lrrr}
\toprule
Method & Calls & Total ms & Peak binaries\\
\midrule
Point no-good & 2.97 & 19.28 & 11.8\\
Deletion-MUS core & 2.50 & 11.20 & 3.4\\
Minimum-cert. core & 2.31 & 11.18 & 2.9\\
Certificate projection & 2.28 & 9.15 & 13.2\\
Compact hierarchy & 2.31 & \textbf{7.61} & 3.2\\
\midrule
Monolithic exact & 1 solve & \textbf{2.46} & 14.2\\
Exact table & 1 solve & 13.35 & 194.1\\
Layered MDD & 1 solve & 18.78 & 336.2\\
\bottomrule
\end{tabular}

\end{table}

\subsection{Performance and Structural Ablation}
Table~\ref{tab:solver} shows that compact feedback is fastest among decomposed
methods, but \textbf{monolithic embedding is fastest overall}. Bold times
mark the best decomposed and direct methods. The interface
is therefore an alternative for separated scheduler/oracle architectures, not
a universal replacement for direct optimization.

The contraction ablation covers 48 projections on balanced, caterpillar, and
clustered trees with 16/32 sinks, selected support two/three, one/four edge
subdivisions, and binary/ternary libraries. Both paths compute vectors without
witnesses, with one warmup and five alternating measured repetitions.
\textbf{All 48 projected sets agree exactly}. Full trees with 20--248 edges
contract to 2--4 groups, with median projection-only speedup
\textbf{6.73$\times$} (range 1.68--32.89$\times$; Fig.~\ref{fig:contraction}).
Certificate-search cost is excluded from this isolated ablation.

All 320 initial targets have minimum certificates passing independent
verification: support ranges from four to eight and traces from 4 to 243
assignments. Table~\ref{tab:scaling} covers 24 general-tree cases through 32
sinks: monolithic and the hierarchy solve 24/24; minimum-certificate projection
solves 22/24 under a 10-s budget. All 12 explicit-DP cases at 16/32 sinks hit
the 250,000-state cap.
These libraries are all-even, allowing one congruence cut in every hierarchy
run; this favorable structure is not representative of all discrete holes.

\begin{table}[t]
\caption{General-tree scaling: six instances per size. Monolithic and
hierarchy solve every instance. $^*$Minimum-projection time includes only
completed instances; the 32-sink subset differs.}
\label{tab:scaling}
\centering\small
\setlength{\tabcolsep}{3pt}
\begin{tabular}{rrrrrr}
\toprule
 & \multicolumn{2}{c}{Completed / 6} & \multicolumn{3}{c}{Mean time (ms)}\\
\cmidrule(lr){2-3}\cmidrule(lr){4-6}
Sinks & DP & Min-proj. & Mono. & Min-proj.$^*$ & Hier.\\
\midrule
4 & 6/6 & 6/6 & 1.51 & 7.16 & 5.20\\
8 & 6/6 & 6/6 & 1.35 & 33.16 & 5.95\\
16 & 0/6 & 6/6 & 3.37 & 672.77 & 14.04\\
32 & 0/6 & \textbf{4/6} & 4.09 & 1816.28 & 75.00\\
\bottomrule
\end{tabular}

\end{table}

\subsection{Stress Beyond Parity}
We embed pair-gap and three-way obstructions into three tunable backgrounds
at 8/16/32 sinks. All terminal libraries cover every residue modulo
$2$--$12$, so no linear-form congruence with those moduli rejects the initial
target. Stored midpoint witnesses place the target inside the exact convex
hull; every full-tree sink pair in each three-way case is feasible. These
are controlled constructions, not independent physical samples.

\balance
All iterative methods share a bottom-up fixed-target membership oracle and
the same HiGHS master. Six methods run three times on 18 instances, each with
a 5-s budget after common backend warmup. Independent gadget lower bounds and
full-tree assignments certify the optimum; every successful run matches it.
All 138 timeouts remain in the 324-run denominator.

\begin{table}[t]
\caption{Non-parity stress: instances solved in all three 5-s runs.
Timeouts remain in each denominator.}
\label{tab:nonparity}
\centering\small
\setlength{\tabcolsep}{3pt}
\begin{tabular}{lrrr}
\toprule
Method & Pair / 9 & Triple / 9 & Total / 18\\
\midrule
Point no-good & 0 & 0 & 0\\
Deletion core & 9 & 0 & 9\\
Minimum-cert. projection & 9 & 4 & 13\\
Compact hierarchy & 3 & 3 & 6\\
Deletion-core projection & 9 & 7 & \textbf{16}\\
\midrule
Monolithic & 9 & 9 & \textbf{18}\\
\bottomrule
\end{tabular}

\end{table}

Minimum-certificate projection solves 13/18, versus 16/18 for projection on
cheaper deletion cores and 6/18 for the compact hierarchy
(Table~\ref{tab:nonparity}). Monolithic solves all 18. Successful triple-core
projections can contain thousands of tuples; deletion-projection's two
failures occur during master solving. \textbf{Minimum support and compact-cut
priority do not guarantee fast closure}. Solved subsets differ, so conditional
time averages are not used to rank these methods.

\subsection{Physical Scope}
Two OpenROAD case studies \cite{ajayi2019openroad} use GCD and AES trees with
35 and 562 original sinks. Each freezes placement and characterizes four
buffer sites over 81 choices, exposing discrete holes and support-four
certificates. On 100 adjacent-hole queries, nonlinear feedback and point
no-good both need two calls: \textbf{no call-count advantage is observed}.
Alternatives were not independently detailed-routed, and the GCD flow skips
optional timing repair. Coupling, slew, load dependence, and multi-corner
variation are outside the additive model; no signoff claim follows.

\section{Conclusion}
Arithmetic certificates and laminar projection contraction provide an exact,
checkable interface for discrete clock-tree realizability. Projection improves
elimination over same-support no-goods, while experiments delimit its practical
value: explanation cost, relation size, and unsuccessful compact-cut search can
dominate. Direct embedding remains the strongest tested end-to-end baseline.

\bibliographystyle{IEEEtran}
\bibliography{references}

@article{fishburn1990clock,
  author  = {John P. Fishburn},
  title   = {Clock Skew Optimization},
  journal = {IEEE Transactions on Computers},
  volume  = {39},
  number  = {7},
  pages   = {945--951},
  year    = {1990},
  doi     = {10.1109/12.55696}
}

@article{hooker2003logic,
  author  = {John N. Hooker and Greger Ottosson},
  title   = {Logic-Based Benders Decomposition},
  journal = {Mathematical Programming},
  volume  = {96},
  number  = {1},
  pages   = {33--60},
  year    = {2003},
  doi     = {10.1007/s10107-003-0375-9}
}

@article{codato2006combinatorial,
  author  = {Gianni Codato and Matteo Fischetti},
  title   = {Combinatorial Benders' Cuts for Mixed-Integer Linear Programming},
  journal = {Operations Research},
  volume  = {54},
  number  = {4},
  pages   = {756--766},
  year    = {2006},
  doi     = {10.1287/opre.1060.0286}
}

@article{cire2016computational,
  author  = {Andr\'{e} A. Cir\'{e} and Elvin \c{C}oban and John N. Hooker},
  title   = {Logic-Based Benders Decomposition for Planning and Scheduling: A Computational Analysis},
  journal = {The Knowledge Engineering Review},
  volume  = {31},
  number  = {5},
  pages   = {440--451},
  year    = {2016},
  doi     = {10.1017/S0269888916000254}
}

@article{kourtev1999synthesis,
  author  = {Ivan S. Kourtev and Eby G. Friedman},
  title   = {Synthesis of Clock Tree Topologies to Implement Nonzero Clock Skew Schedule},
  journal = {IEE Proceedings -- Circuits, Devices and Systems},
  volume  = {146},
  number  = {6},
  pages   = {321--326},
  year    = {1999},
  doi     = {10.1049/ip-cds:19990582}
}

@inproceedings{tsao2000ust,
  author    = {Chung-Wen Albert Tsao and Cheng-Kok Koh},
  title     = {{UST/DME}: A Clock Tree Router for General Skew Constraints},
  booktitle = {Proceedings of the IEEE/ACM International Conference on Computer-Aided Design},
  pages     = {400--405},
  year      = {2000},
  doi       = {10.1109/ICCAD.2000.896505}
}

@inproceedings{held2003clock,
  author    = {Stephan Held and Bernhard Korte and Jens Massberg and Matthias Ringe and Jens Vygen},
  title     = {Clock Scheduling and Clocktree Construction for High Performance {ASICs}},
  booktitle = {Proceedings of the IEEE/ACM International Conference on Computer-Aided Design},
  pages     = {232--240},
  year      = {2003},
  doi       = {10.1109/ICCAD.2003.48}
}

@inproceedings{ewetz2017predictable,
  author    = {Rickard Ewetz},
  title     = {A Clock Tree Optimization Framework with Predictable Timing Quality},
  booktitle = {Proceedings of the Design Automation Conference},
  pages     = {72:1--72:6},
  year      = {2017},
  doi       = {10.1145/3061639.3062184}
}

@inproceedings{uysal2019latency,
  author    = {Necati Uysal and Wen-Hao Liu and Rickard Ewetz},
  title     = {Latency Constraint Guided Buffer Sizing and Layer Assignment for Clock Trees With Useful Skew},
  booktitle = {Proceedings of the Asia and South Pacific Design Automation Conference},
  pages     = {761--766},
  year      = {2019},
  doi       = {10.1145/3287624.3287681}
}

@inproceedings{ajayi2019openroad,
  author    = {Tutu Ajayi and Vidya A. Chhabria and Mateus Foga\c{c}a and Soheil Hashemi and Abdelrahman Hosny and Andrew B. Kahng and Minsoo Kim and Jeongsup Lee and Uday Mallappa and Marina Neseem and Geraldo Pradipta and Sherief Reda and Mehdi Saligane and Sachin S. Sapatnekar and Carl Sechen and Mohamed Shalan and William Swartz and Lutong Wang and Zhehong Wang and Mingyu Woo and Bangqi Xu},
  title     = {Toward an Open-Source Digital Flow: First Learnings from the {OpenROAD} Project},
  booktitle = {Proceedings of the Design Automation Conference},
  pages     = {76:1--76:4},
  year      = {2019},
  doi       = {10.1145/3316781.3326334}
}

@inproceedings{junker2004quickxplain,
  author    = {Ulrich Junker},
  title     = {{QUICKXPLAIN}: Preferred Explanations and Relaxations for Over-Constrained Problems},
  booktitle = {Proceedings of the AAAI Conference on Artificial Intelligence},
  pages     = {167--172},
  year      = {2004}
}

@inproceedings{andersen2007mdd,
  author    = {Henrik Reif Andersen and Tarik Hadzic and John N. Hooker and Peter Tiedemann},
  title     = {A Constraint Store Based on Multivalued Decision Diagrams},
  booktitle = {Principles and Practice of Constraint Programming},
  pages     = {118--132},
  year      = {2007},
  doi       = {10.1007/978-3-540-74970-7_11}
}

@inproceedings{demeulenaere2016compact,
  author    = {Jordan Demeulenaere and Renaud Hartert and Christophe Lecoutre and Guillaume Perez and Laurent Perron and Jean-Charles R\'{e}gin and Pierre Schaus},
  title     = {Compact-Table: Efficiently Filtering Table Constraints with Reversible Sparse Bit-Sets},
  booktitle = {Principles and Practice of Constraint Programming},
  pages     = {207--223},
  year      = {2016},
  doi       = {10.1007/978-3-319-44953-1_14}
}
\end{document}